\documentclass[sn-mathphys-num]{sn-jnl}
\usepackage{multirow}%
\usepackage{amsmath,amssymb,amsfonts}%
\usepackage{amsthm}%
\usepackage{mathrsfs}%
\usepackage[title]{appendix}%
\usepackage{xcolor}%
\usepackage{textcomp}%
\usepackage{manyfoot}%
\usepackage{booktabs}%
\usepackage{algorithm}%
\usepackage{algorithmicx}%
\usepackage{algpseudocode}%
\usepackage{listings}%
\newtheorem{theorem}{Theorem}
\newtheorem{lemma}{Lemma}
\newtheorem{problem}{Problem}
\newtheorem{assumption}{Assumption}
\newtheorem{remark}{Remark}%

\newtheorem{definition}{Definition}%

\begin{document}

\title[Article Title]{Distributed Cooperative Control with Prescribed Performance of BESSs with A Unified Discharge Constrain for Power Allocation under Dynamic Load}

\author[1,2]{\fnm{Yalin} \sur{Zhang}}\email{zhangyl@mail.nankai.edu.cn}

\author[1,2]{\fnm{Zhongxin} \sur{Liu}}\email{lzhx@nankai.edu.cn}

\author*[1,2]{\fnm{Fuyong} \sur{Wang}}\email{wangfy@nankai.edu.cn}
\author[1,2]{\fnm{Zengqiang} \sur{Chen}}\email{chenzq@nankai.edu.cn}

\affil*[1]{\orgdiv{College of Artificial Intelligence}, \orgname{Nankai University}, \orgaddress{\street{No. 38 Tongyan Road}, \city{Tianjin}, \postcode{300350}, \state{Tianjin}, \country{China}}}

\affil*[2]{\orgdiv{Tianjin Key Laboratory of Interventional Brain-Computer Interface and Intelligent Rehabilitation}, \orgname{Nankai University}, \orgaddress{\street{No. 38 Tongyan Road}, \city{Tianjin}, \postcode{300350}, \state{Tianjin}, \country{China}}}
\abstract{Battery energy storage system (BESS) is integrated into the smart grid to enhance scalability, economy, and greenery. And the State-of-Charge (SoC) balance is one of the basic problems of BESSs, which can maximize the utilization of capacity. BESSs with a unified relative variation rate for SoC can be simultaneously filled or empty, while real-time estimation schemes of SoC balance and power sharing states are required in this power allocation scheme. Therefore, the prescribed performance control (PPC) method is applied in this paper to design two distributed estimators based on multi-agent systems (MASs), in order to estimate the power sharing and SoC balance states in real-time under dynamic load driving. In this way, these two average values can ultimately be well estimated with almost zero error performance, and dynamic performance and steady-state performance of the two estimators can be adjusted by different parameters. Similarly, consensus performance and dynamic tracking performance are decoupled. These results provide a broader range for the selection of gains. To verify the effectiveness, robustness and progressiveness of the designed estimators, some cases with a resistance network containing 4 BESSs as load distribution are designed and discussed. Further more, to test scalability, a large-scale system containing 12 BESSs is conducted to the designed scheme.}

\keywords{Multi-agent systems, distributed dynamic average tracking control, prescribed performance, power allocation, battery energy storage system}

\maketitle
\section{Introduction}
\par The traditional large thermal power units are gradually being replaced by distributed renewable energy generation, which greatly reduces power generation costs and protects the environment. At the same time, battery energy storage system (BESS) has become an indispensable part of the smart grid to ensure the provision of high-quality uninterrupted power \cite{HossainLipu2022, WANG2022104812, solyaliComprehensiveStateoftheartReview2022, ghanjatiOptimalSizingEnergy2022}. This is mainly because renewable energy generation, as a distributed power source, exhibits intermittency and randomness \cite{WANG2022104812, solyaliComprehensiveStateoftheartReview2022, ghanjatiOptimalSizingEnergy2022}. In addition, BESS can alleviate the power gap between supply and demand, thereby alleviating the electricity shortage in the community \cite{HossainLipu2022, WANG2022104812}. Therefore, BESS, as a backup power source or supplement, is extensively integrated into the smart grid, as shown in Fig. \ref{EI}, to improve its stability, economy, and greenness \cite{HossainLipu2022, WANG2022104812, solyaliComprehensiveStateoftheartReview2022}.
\begin{figure}
	\centering
	\includegraphics[width=10cm]{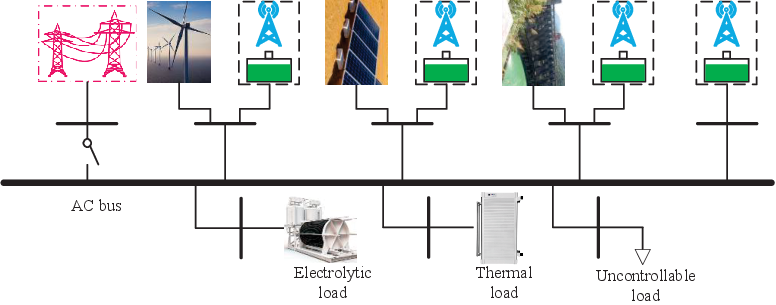}
	\caption{BESSs in smart grids.}
	\label{EI}
\end{figure}
\par However, battery cells in parallel BESSs may waste capacity due to insufficient charging and discharging, which cannot maximize the economic and green efficiency of the power grid. This is due to the State of Charge (SoC) \cite{Zeng2022l, Wang2020d, Litjens2018} imbalance caused by inappropriate power allocation schemes. Therefore, the energy management and power allocation issues of BESS are highly worth studying to achieve SoC balance. 
\par At first, in order to achieve SoC balance, scholars designed some centralized or decentralized solutions. Among them, decentralized schemes represented by discharge circuits \cite{Baveja2023} were designed to consume excess power to ensure that other batteries match the one at the lowest SoC level. This solution resulted in waste of capacity and electricity. And some centralized schemes, such as SoC compensators \cite{Cao2021}, were designed to effectively achieve SoC balancing. However, centralized methods are often replaced by distributed solutions that have emerged in recent years due to the need for expensive control centers, susceptibility to single point failures, and poor scalability.
\par As a typical distributed system, multi-agent systems (MASs) exhibit high efficiency, high scalability, and robustness, and are widely used to manage microgrids/BESSs. Under constant
load, some static distributed controllers \cite{Cai2016, Khazaei2019a, Ding2020, Hu2019, Nguyen2021a, Zhang2020} were designed to achieve SoC balance. Therein, some distributed controllers applied for the first-order model based on the ampere hour integration method were designed to directly balance the SoC \cite{Khazaei2019a, Zhang2020}. Others were designed to be suitable for the second-order integrator model composed of SoC and active power \cite{Cai2016, Ding2020, Hu2019, Nguyen2021a}, in order to achieve SoC balance and power sharing. Based on these two modes, some researchers have developed finite time/fixed time protocols \cite{Ding2020, Hu2019}, fault tolerance/attack resilience protocols \cite{Ding2020}, etc., in order to improve the dynamic and steady-state performance as well as robustness of the system. In addition, methods such as model free control \cite{Hong2021d} and sliding mode control \cite{Zhang2019e} were used to improve the static protocol for SoC balance. The application of the above solutions can effectively solve the SoC balance problem under constant load and demonstrate different good performance. However, theoretically speaking, these protocols seem to be ineffective in addressing SoC balance issues under time-varying loads.
\par Recently, some preliminary explorations have been completed in \cite{Xing2019, Meng2021, Meng2022, Wu2022} to address the SoC balance problem under time-varying loads. BESSs with a unified SoC relative variation rate often achieved simultaneous filling or emptying \cite{Xing2019}. This power allocation scheme required real-time estimation of SoC balance and power sharing states, which provided us with vast exploration space and endless inspiration. Therefore, with the help of the distributed control method for MASs, the states of SoC balance and power sharing could be estimated in a distributed manner. An asymptotic distributed estimation scheme was first designed in \cite{Xing2019}. Subsequently, an asymptotic scheme with an adjustable steady-state error and a finite time scheme were designed simultaneously in \cite{Meng2021}. Furthermore, for the asymptotic scheme with adjustable steady-state error, the parameters of the battery were estimated in \cite{Meng2022}. A discrete scheme was designed in \cite{Wu2022} to achieve consensus in multiple relative variation rates, thereby achieving SoC balance and power sharing. In \cite{9925608}, a distributed prescribed-time consensus scheme was constructed to accelerate convergence. On the basis of \cite{Xing2019}, some authors have designed a distributed scheme in \cite{luDistributedSecureBalancing2023} that can resist DoS attacks. Similarly, a distributed scheme to resist random communication failures was developed in \cite{DistributedStateofchargePower2024}. Furthermore, in order to save communication resources, two distributed schemes with event triggered communication mechanisms were developed in \cite{qianDistributedEventtriggeredAlgorithms} and \cite{xingRobustEventTriggeredDynamic2020b}, respectively.
\par It is obvious that the asymptotic estimator mentioned above can only adjust steady-state errors \cite{Xing2019, Meng2021, Meng2022}, while finite- \cite{9925608}/prescribed-time estimators \cite{Meng2021} often inevitably exhibit chatting. Therefore, a high-quality estimation scheme that balances dynamic and steady-state performance is urgently needed for BESSs with a unified relative variation rate.
\par Inspired by the above work, for BESSs with unified SoC relative variation rate, in order to design decoupled dynamic and steady-state performance schemes, the concept of prescribed performance control (PPC) \cite{Bechlioulis2008, Bechlioulis2009, Bechlioulis2010} is introduced to improve the asymptotic estimators. Specifically, the main contributions are presented as follows:
\begin{enumerate}	
	\item  In the designed scheme, information about the derivative of the load with respect to time is not required. This is different from \cite{Meng2021, 9925608}.
	\item The consensus performance and average tracking performance are decoupled due to the application of error modulation technology. In this way, when selecting gains, there is no trade-off between various performances, providing a wider range of choices.
	\item Compared to the prescribed-time estimators designed in \cite{9925608}, the scheme adopted exhibits superiority in terms of consensus, average tracking, and steady-state performance.
\end{enumerate}
\par Provide an introduction for the following chapters. Firstly, relevant issues, including the dynamic of SoC, an unified relative SoC change rate, and control objectives, are described in Section \ref{2}. Secondly, some preliminary knowledge, such as algebraic graph theory, PPC method, and dynamical system, is introduced in Section \ref{3}. Subsequently, the constraints of load and SoC, and power sharing and SoC balance states estimators are designed in Section \ref{4}. Some cases are designed to validate and test the adopted estimators in Section \ref{5}. In the end, a conclusion is drawn in Section \ref{6}.
\section{Problem Statement}\label{2}
In this section, the dynamic of the SoC of a battery based on the ampere hour integration method is modified to the active power integration scheme. A discharging rate constraint is introduced to manage batteries to ensure that no battery will exit prematurely due to insufficient power. Based on this, the control objective of this article is given. For simplicity, unless necessary, the time variables in all symbols are omitted by default.
\subsection{SoC and A Discharge Rate Constraint of A Battery}
For a BESS, the energy storage level of the battery unit is characterized by SoC, and the ampere hour integration method \cite{Cao2021, Cai2016, Hong2021d, Xing2019, Chen2023b, Ren2020}, as shown in \eqref{soc}, under different time scales,
\begin{equation}
	\label{soc}
	\dot{E}_i=-\frac1{\mathrm{T}\mathrm{Q}_i}I_i,
\end{equation}
is an effective way for calculating SoC by relevant experimental analysis,
where $E_i$ is the SoC of BESS $i$, $\mathrm{Q}_i$ is a constant parameter related to the capacity of BESS $i$, $\mathrm{T}$ denotes the time scale. Besides, the output power of BESS $i$ is calculated as $\tilde{P}_{i} = \mathrm{V}_{i}I_{i}$, where $\mathrm{V}_i$ is the terminal voltage of BESS $i$. Considering the small change in battery terminal voltage when the SoC is in a large range, define a constant $\mathrm{K}^E_i$ as $\mathrm{K}_{i}^{E} = \frac{1}{\mathrm{Q}_{i}\mathrm{V}_{i}}$, which is the charge/discharge coefficient and implies that SoC conforming to the dynamic \eqref{soc} is heterogeneous. And then, the SoC of BESS $i$ for $i\in\{1,2,\cdots,\mathrm{n}\}$ is subjected to
\begin{equation}
	\label{soc1}
	\dot{E}_i=-\frac{\mathrm{K}_i^E}{\mathrm{T}}\tilde{P}_i=-\frac{1}{\mathrm{T}}P_i,
\end{equation}
where $P_i=\mathrm{K}_i^E\tilde{P}_i$ is proportional output power of BESS $i$ \cite{Cai2016, Khazaei2019a, Ding2020, Hu2019, Nguyen2021a, Zhang2020, Zhang2019e, Xing2019, Meng2021, Meng2022, Wu2022}. For simplicity, $P_i$ for $i\in\{1,2,\cdots,\mathrm{n}\}$ is referred to as power in this paper.
\par SoC balance is one of the key problems in managing parallel BESSs, which ensures that no battery will exit prematurely due to low energy level. In a multi-BESS network, achievement of SoC balance can greatly optimize battery life. Supply and demand balance should be maintained. However, the power of each BESS is often constrained by SoC and is expected to have the same relative SoC change rate \cite{Xing2019, Meng2021, Meng2022, Wu2022}, i.e.,
$$\frac{P_i}{E_i}=\frac{P_\mathrm{a}}{E_\mathrm{a}},$$
where $E_\mathrm{a}$ and $P_\mathrm{a}$ are the values of the average SoC and the average power of BESS $i$, and which is rewritten as
\begin{equation}
	\label{power0}
	P_i=\frac{P_\mathrm{a}}{E_\mathrm{a}}E_i.
\end{equation}
Here, a theorem is cited to illustrate the impact of a power generation rule as described in \eqref{power0} on a multi-BESS network.
\begin{theorem}
	\cite{Xing2019, Meng2021, Meng2022, Wu2022} For a multi-BESS network containing $\mathrm{n}$ BESSs, the simultaneous depletion of all BESSs can be ensured by \eqref{power0}. Furthermore, the balance between power supply and demand can be reached if $P_\mathrm{a}$ is the average load power.
\end{theorem}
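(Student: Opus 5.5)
The plan is to turn the power rule \eqref{power0} into a statement about the ratios $E_i/E_\mathrm{a}$ and show that these ratios are frozen in time. Throughout I take $E_\mathrm{a}=\frac{1}{\mathrm{n}}\sum_{j=1}^{\mathrm{n}}E_j$ and $P_\mathrm{a}=\frac{1}{\mathrm{n}}\sum_{j=1}^{\mathrm{n}}P_j$, and I assume $E_\mathrm{a}>0$ on the interval of interest, so that \eqref{power0} is well defined.

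The first step is to get the dynamics of the average SoC. Averaging \eqref{soc1} over all units gives
\begin{equation*}
\dot E_\mathrm{a}=-\frac{1}{\mathrm{T}}P_\mathrm{a}.
\end{equation*}
Substituting \eqref{power0} into \eqref{soc1} then gives
\begin{equation*}
\dot E_i=-\frac{1}{\mathrm{T}}\frac{P_\mathrm{a}}{E_\mathrm{a}}E_i=\frac{\dot E_\mathrm{a}}{E_\mathrm{a}}E_i .
\end{equation*}
This says that every $E_i$ has the same relative rate of change as $E_\mathrm{a}$.

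The second step is the key observation that the ratio $r_i=E_i/E_\mathrm{a}$ is constant. Differentiating,
\begin{equation*}
\dot r_i=\frac{\dot E_iE_\mathrm{a}-E_i\dot E_\mathrm{a}}{E_\mathrm{a}^2}=0 .
\end{equation*}
Equivalently, $\ln E_i-\ln E_\mathrm{a}$ is constant for each $i$ with $E_i(0)>0$. Hence $E_i(t)=r_i(0)E_\mathrm{a}(t)$ for all $t$, with $r_i(0)=E_i(0)/E_\mathrm{a}(0)>0$. Because every $E_i$ is a fixed positive multiple of $E_\mathrm{a}$, $E_i$ vanishes exactly when $E_\mathrm{a}$ does. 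So if the network runs until some battery is empty, all batteries reach zero SoC at the same instant $t^*$, and no battery exits prematurely. This proves the first claim. The delicate point is that $E_\mathrm{a}$ appears in a denominator: I would argue on $[0,t^*)$, where $E_\mathrm{a}>0$, and obtain simultaneity at $t^*$ by continuity, as a limit. Along the way I would note that the $E_i$ cannot change sign before $t^*$, because the ODE for each $E_i$ is linear with a common coefficient.

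The third step handles supply and demand. Summing \eqref{power0} over $i$ gives
\begin{equation*}
\sum_i P_i=\frac{P_\mathrm{a}}{E_\mathrm{a}}\sum_i E_i=\mathrm{n}P_\mathrm{a},
\end{equation*}
so the total supplied (proportional) power equals $\mathrm{n}$ times $P_\mathrm{a}$. If $P_\mathrm{a}$ is chosen as the average load power, the total supply therefore equals the total load at every instant, which is the claimed supply--demand balance.

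The main obstacle is not computational but one of interpretation. With heterogeneous $\mathrm{K}^E_i$, the balance equality is stated in terms of the scaled powers $P_i=\mathrm{K}^E_i\tilde P_i$. I would state explicitly that "average load power" means the load expressed in these same scaled units, consistent with the convention in the cited works, so that the summation argument gives exact balance.
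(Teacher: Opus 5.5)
Your proof is correct. The paper gives no proof of its own for this statement; it simply imports the theorem from the cited works. Your argument is the standard one behind it: all $E_i$ share the relative rate $\dot E_\mathrm{a}/E_\mathrm{a}$, so $E_i=r_i(0)E_\mathrm{a}$ vanishes exactly when $E_\mathrm{a}$ does, and summing \eqref{power0} gives $\sum_iP_i=\mathrm{n}P_\mathrm{a}$. Reading the balance in the scaled units $P_i=\mathrm{K}_i^E\tilde P_i$ is the paper's convention, and you are right to insist on it: in physical units $\sum_i\tilde P_i=\frac{P_\mathrm{a}}{E_\mathrm{a}}\sum_iE_i/\mathrm{K}_i^E$, which need not match the physical load when the $\mathrm{K}_i^E$ differ.
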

\par Thus, it is necessary to estimate the values of $P_\mathrm{a}$ and $E_\mathrm{a}$ to allocate power by
\begin{equation}
	\label{power1}
	P_i=\frac{\hat{P}_{\mathrm{a},i}}{\hat{E}_{\mathrm{a},i}}E_i,
\end{equation}
where $\hat{E}_{\mathrm{a},i}$ and $\hat{P}_{\mathrm{a},i}$ are the real-time estimated values of the average SoC and the average power by BESS $i$. For a BESS, it is usually determined whether it is in charging mode or discharging mode by defining the numerical sign of its output power. In this paper, a positive output power means it is in discharge mode, while a negative output power means it is in charge mode. The spatial control structures of grid-connected BESSs and isolated BESSs are different. Among them, droop control is suitable for isolated microgrids, while grid-connected microgrids are often controlled by power-quality (PQ) control. Both grid connected and isolated modes require a power reference value when implementing control. Thus, for isolated BESSs, this power allocation scheme is shown in Fig. \ref{Bcon}, where each BESS is in discharging mode and droop control is the primary control method.
\begin{figure}
	\centering
	\includegraphics[width=6cm]{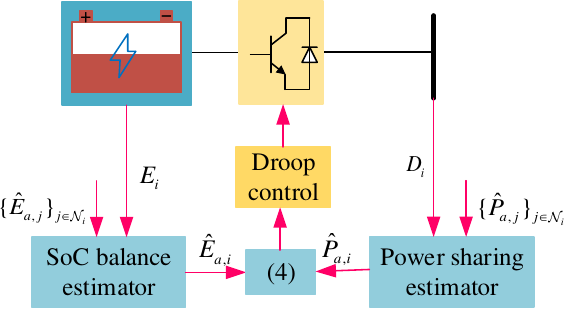}
	\caption{Control structure of a BESS under a discharge rate constrain.}
	\label{Bcon}
\end{figure}
\subsection{Control Objectives}
According to \eqref{power1}, in order to allocate power reasonably so that each BESS can unleash maximum capacity, the average SoC and the average power should be well estimated. The detailed objectives description are organized in Problem \ref{problem1}.
\begin{problem}
	\label{problem1}
	Consider a multi-BESS network including $\mathrm{n}$ BESSs. And the SoC of each BESS is obtained by dynamic \eqref{soc1}, the power can be calculated by \eqref{power1}. It is intended to design two distributed estimators for each BESS to solve the following problems,\\
	1) The real time power sharing state of each BESS can be well estimated, and the steady-state error can be limited to evolve within a predetermined zone, i.e.,
	$$\lim\limits_{t\to+\infty}\left|\hat{P}_{\mathrm{a},i}-P_\mathrm{a}\right|\leq\epsilon_P,$$
	where $i\in\{1,2,\cdots,\mathrm{n}\}$, $P_\mathrm{a}=\frac{1}{\mathrm{n}}\sum\limits_{i=1}^{\mathrm{n}}D_{i}$, $D_{i}$ is the load that BESS $i$ needs to share, $\epsilon_P$ is a pre-specified value.\\
	2) The real-time SoC balance state can be well estimated, and the steady-state error can be limited to evolve within a predetermined zone, i.e.,
	$$\lim\limits_{t\to+\infty}\left|\hat{E}_{\mathrm{a},i}-E_\mathrm{a}\right|\leq\epsilon_E,$$
	where $E_\mathrm{a}=\frac{1}{\mathrm{n}}\sum\limits_{i=1}^{\mathrm{n}}E_{i}$, $\epsilon_E$ is a pre-specified value.
\end{problem}
\section{Preliminaries}\label{3}
Algebraic graph theory is usually used to explain communication networks such that distributed controllers based on MASs are well designed. The PPC method is applied to regulate the steady-state and dynamic performance of the system. In this section, these two kinds of useful preliminary knowledge for controller design are introduced. In addition, lemmas related to the maximal solution problem of dynamical systems are introduced for stability analysis.
\subsection{Graph Theory}
For a multi-BESS network managed by a MAS, agents communicate with each other over a topology $\mathcal{G} = (\tilde{\mathcal{V}},\mathcal{E})$.
Therein, $\mathcal{V}=\{v_{i}|i\in\{1,2,\cdots,\mathrm{n}\}\}$ stands for the set embracing n nodes, each of which represents an agent. Meanwhile, $\mathcal{E}=\{(v_{i},v_{j})|v_{i},v_{j}\in\mathcal{V}\}$ is defined as the edge set. If there
exists an edge pointing from nodes $j$ to $i$, it is denoted as $(v_i,v_j)$, and agent $j$ is called as a neighbor of agent $i$, which means agent $i$ has access to the information of agent $j$. Denote $\mathcal{N}_{i}=\{v_{j}|(v_{i},v_{j})\in\mathcal{E}\}$ as the neighbor set of agent $i$.
\subsection{PPC Method}
The PPC method is implemented to ensure the dynamic and steady-state performance of the system. More specifically, PPC can enable scalar error $e(t)$ to converge to a smaller set of prescribed residuals with a minimum prescribed convergence rate \cite{Bechlioulis2008, Bechlioulis2009}. Once a prescribed performance controller is designed, in this way, the scalar error $e(t)$ strictly evolves within the predetermined region, i.e,
$$-\rho(t)<e(t)<\rho(t),\forall t\geq0,$$
where $\rho(t)$, called a performance function, is a function of time and smooth and bounded and required that $\lim\limits_{t\to+\infty}\rho(t)>0$.
\par The commonly used exponential performance function is shown as follows,
\begin{equation}
	\label{pf}
	\rho(t)=(\rho_0-\rho_\infty)e^{-\lambda t}+\rho_\infty,
\end{equation}
where $\rho_0$, $\rho_\infty$ and $\lambda$ are positive constants. Usually, $\rho_0$ is selected as $\rho_0>|e(0)|$, and $\rho_\infty$ is often considered a parameter that characterizes the resolution of measurement equipment and is the steady-state error allowed by the system. Besides, $\lambda$ is the allowable minimum convergence rate of $e(t)$. This
indicates that the introduction of performance function $\rho(t)$ constrains the steady-state error of the system and improves its dynamic performance.
\par Define a modulated error $\xi(t)=\frac{e(t)}{\rho(t)}$, and a strictly increasing, odd and bijective mapping $T{:}(-1,1)\to(-\infty,+\infty)$ with
$T(0) = 0$, shown in below, is employed,
\begin{equation}
	\label{pf1}
	T(\xi)=\frac{1}{2}\ln(\frac{1+\xi}{1-\xi}),
\end{equation}
whose Jacobian derivative is found as follows,
\begin{equation}
	\label{pf2}
	J_T(\xi)=\frac{\mathrm{d}T(\xi)}{\mathrm{d}\xi}=\frac{1}{1-\xi^2},
\end{equation}
such that $J_T(\xi)>0$.
\begin{lemma}
	\label{lem1}
	\cite{Bechlioulis2010} There is an important property on \eqref{pf1} and \eqref{pf2}. That is, there exists a positive constant $\bar{\epsilon}$ such that $|T(\xi)|>\bar{\epsilon}$, and for any $\mathrm{K}$, $\mathrm{F}>0$, $$-\mathrm{K}T(\xi)J_T(\xi)\xi+\mathrm{F}<0.$$
\end{lemma}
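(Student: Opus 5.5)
The statement is loosely worded, so I would first fix its precise meaning and then verify it by direct computation. The meaning I would prove is: for any given $\mathrm{K},\mathrm{F}>0$ there exists $\bar{\epsilon}>0$, depending on the ratio $\mathrm{F}/\mathrm{K}$, such that whenever $|T(\xi)|>\bar{\epsilon}$ we have $-\mathrm{K}T(\xi)J_T(\xi)\xi+\mathrm{F}<0$. This is the form in which the result is used in later Lyapunov arguments: outside a compact neighbourhood of $\xi=0$, the negative term dominates any bounded perturbation $\mathrm{F}$.

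The key step is to rewrite everything in terms of $s=T(\xi)$. Since $T$ in \eqref{pf1} is the inverse hyperbolic tangent, we have $\xi=\tanh s$ with $s\in(-\infty,+\infty)$. From \eqref{pf2},
\begin{equation*}
J_T(\xi)=\frac{1}{1-\tanh^2 s}=\cosh^2 s .
\end{equation*}
Substituting, and using that $s\tanh s=|s|\tanh|s|$ because $\tanh$ is odd,
\begin{equation*}
T(\xi)J_T(\xi)\xi=s\tanh s\cosh^2 s=|s|\sinh|s|\cosh|s|=\tfrac{1}{2}|s|\sinh(2|s|)=:g(|s|).
\end{equation*}
The function $g(r)=\tfrac{1}{2}r\sinh(2r)$ on $r\ge 0$ satisfies $g(0)=0$. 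It is strictly increasing, as a product of nonnegative increasing factors, and it tends to $+\infty$. Consequently $T(\xi)J_T(\xi)\xi\ge 0$ for all $\xi\in(-1,1)$, and this quantity grows without bound as $|T(\xi)|\to\infty$.

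The conclusion then follows by a level-set choice. Take $\bar{\epsilon}>0$ with $g(\bar{\epsilon})\ge \mathrm{F}/\mathrm{K}$; such a value exists, for instance the unique root of $g(r)=\mathrm{F}/\mathrm{K}$, by continuity and monotonicity of $g$. Then for $|T(\xi)|>\bar{\epsilon}$, strict monotonicity gives
\begin{equation*}
\mathrm{K}\,T(\xi)J_T(\xi)\xi=\mathrm{K}\,g(|T(\xi)|)>\mathrm{K}\,g(\bar{\epsilon})\ge\mathrm{F},
\end{equation*}
which is exactly the claimed inequality. If a version uniform in $\mathrm{K}$ and $\mathrm{F}$ is wanted, the same argument applies with $\mathrm{F}$ replaced by an upper bound on the perturbation and $\mathrm{K}$ by a lower bound on the gain.

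The only real obstacle is the quantifier ambiguity in the statement. Read literally, "$|T(\xi)|>\bar{\epsilon}$ for all $\xi$" is false at $\xi=0$. I would therefore state explicitly that $\bar{\epsilon}$ depends on $\mathrm{F}/\mathrm{K}$ and that the inequality holds on the region $\{|T(\xi)|>\bar{\epsilon}\}$. Once that is fixed, the hyperbolic substitution makes the rest routine.
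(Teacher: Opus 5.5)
Your proof is correct. Your reading of the quantifiers matches how the lemma is actually used in the proof of Theorem~\ref{TH1}: $\bar{\epsilon}$ is chosen after $\mathrm{K},\mathrm{F}$, depends only on $\mathrm{F}/\mathrm{K}$, and the inequality is claimed on $\{|T(\xi)|>\bar{\epsilon}\}$. In that proof $\mathrm{K}=\mathrm{k}_\mathrm{r}-\lambda$ and $\mathrm{F}=\sup_{t\geq0}(\|\mathrm{k}_\mathrm{r}D+\dot{D}\|)^2/(4\mathrm{k})$ are fixed constants.

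The paper gives no argument of its own; it simply quotes the lemma from the Bechlioulis--Rovithakis reference. There is therefore no route to match. What your route buys is a self-contained verification. The substitution $s=T(\xi)$, $\xi=\tanh s$ gives the exact identity $T(\xi)J_T(\xi)\xi=\tfrac12|s|\sinh(2|s|)$, and hence the explicit threshold $\bar{\epsilon}=g^{-1}(\mathrm{F}/\mathrm{K})$. A cruder bound, $J_T\geq1$ together with $T(\xi)\xi=|s|\tanh|s|$, would also work but gives a more conservative threshold.

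One addition is worth making, because the paper applies the lemma in vector form: it concludes $\dot{V}\leq0$ whenever $\|T^P\|>\bar{\epsilon}$. It also silently uses $T(\xi_l^P)J_T(\xi_l^P)\xi_l^P\geq0$ to absorb the $\dot{R}$ term via $0<-\dot{\rho}_l/\rho_l<\lambda$. Your termwise nonnegativity covers both uses. If $\|T^P\|>\sqrt{\mathrm{m}}\,\bar{\epsilon}$, then some $|T(\xi_l^P)|>\bar{\epsilon}$. Since every summand is nonnegative, $(T^P)^\mathrm{T}J_T^P\xi^P\geq g(|T(\xi_l^P)|)>\mathrm{F}/\mathrm{K}$. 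So the scalar statement transfers to the setting of Theorem~\ref{TH1}, at the cost of a factor $\sqrt{\mathrm{m}}$ in the threshold.
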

\subsection{Dynamical systems}
Consider the following initial value problem
\begin{equation}
	\label{Ds}
	\dot{x}=f(t,x), x(0)=x^0\in\Omega_x,
\end{equation}
where $f:\mathbb{R}_+\times \Omega_x \to \mathbb{R}^\mathrm{n}$ and $\Omega_x\subset\mathbb{R}^\mathrm{n}$ is an open and nonempty set. Next, it is necessary to give the following definition and conclusions regarding this problem.
\begin{definition}
	\label{def1}
	\cite{Sontag1998} A solution $x(t)$ of \eqref{Ds} is so-called maximal if it cannot be right extended.
\end{definition}
\begin{lemma}
	\label{lem2}
	\cite{Sontag1998} Consider the initial value problem \eqref{Ds} for $x\in\Omega_{x}$. If $f(t,x)$ is piecewise continuous and locally integrable on $t$ and locally Lipschitz on $x$, there exists a
	maximal solution $x(t)\in\Omega_{x}$ with $\forall t\ge 0$ of \eqref{Ds}.
\end{lemma}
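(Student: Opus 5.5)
The plan is the classical Carathéodory–Picard argument followed by a gluing construction. The conclusion that can honestly be proved is that a unique maximal solution exists on some interval $[0,\tau_{\max})$ with $\tau_{\max}\in(0,+\infty]$ and $x(t)\in\Omega_x$ for all $t$ in that interval. This is the sense in which the lemma is later used, with $\tau_{\max}=+\infty$ established separately by showing that the trajectory stays in a compact subset of $\Omega_x$. I would not try to prove existence for all $t\ge 0$ directly, since that is false in general (for example $\dot x=x^2$).

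\textbf{Step 1 (local existence).} Since $\Omega_x$ is open, I pick $r>0$ with $\bar B(x^0,r)\subset\Omega_x$. On $[0,1]\times\bar B(x^0,r)$, piecewise continuity and local integrability in $t$ give an integrable bound $|f(t,x)|\le m(t)$ with $m\in L^1$. Local Lipschitz continuity gives $|f(t,x)-f(t,y)|\le L\,|x-y|$, where I allow $L$ to be an integrable function $\ell(t)$ if needed. I then choose $h>0$ so that $\int_0^h m\,ds\le r$ and $\int_0^h \ell\,ds<1$. With these choices the Picard operator
\[
(\Phi x)(t)=x^0+\int_0^t f(s,x(s))\,\mathrm{d}s
\]
maps $C([0,h];\bar B(x^0,r))$ into itself and is a contraction. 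Its fixed point is an absolutely continuous function that satisfies \eqref{Ds} almost everywhere, which is the Carathéodory notion of solution required because $f$ is only piecewise continuous in $t$.

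\textbf{Step 2 (uniqueness).} Let two solutions be defined on a common interval. On any compact subinterval their difference $w$ satisfies $|w(t)|\le\int_0^t \ell(s)|w(s)|\,\mathrm{d}s$, so $w\equiv0$ by Gronwall's inequality. The same argument applied at any restart time $t_0$ shows that solutions agree wherever both are defined.

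\textbf{Step 3 (maximality).} I let $\tau_{\max}$ be the supremum of the right endpoints of intervals $[0,\tau)$ on which a solution exists in $\Omega_x$; this set is nonempty by Step 1. By Step 2 these solutions are mutually compatible, so their union defines a single solution on $[0,\tau_{\max})$. This solution cannot be right extended by the definition of $\tau_{\max}$, so it is maximal in the sense of Definition \ref{def1}.

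\textbf{Main obstacle.} The delicate point is Step 1 under the weak regularity in $t$: the bounds $m$ and $\ell$ must be uniform on a neighbourhood of $x^0$ and integrable in $t$, and the solution must be taken in the absolutely continuous sense. I would also record the standard escape property for later use: if $\tau_{\max}<\infty$, then $x(t)$ leaves every compact subset of $\Omega_x$ as $t\to\tau_{\max}$. This follows from re-applying Step 1 with uniform $h$ on a compact set, and it is what the PPC analysis later invokes to conclude $\tau_{\max}=+\infty$.
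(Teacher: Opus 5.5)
Your proposal is correct and is the standard Carath\'eodory--Picard contraction, Gronwall uniqueness and gluing argument behind the result that the paper simply cites from Sontag without proof. You are also right that the lemma as printed (``$\forall t\ge 0$'') cannot hold literally and must be read as giving a maximal solution on some $[0,\tau_{\max})$ inside $\Omega_x$; that is exactly how the proof of Theorem~\ref{TH1} combines it with Lemma~\ref{lem3}. One small tightening: if integrability in $t$ is only assumed for each fixed $x$ (as in Sontag), the uniform bound on the ball should come from $|f(t,x)|\le|f(t,x^0)|+\ell(t)\,r$, i.e.\ from the Lipschitz estimate rather than from integrability alone.
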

\begin{lemma}
	\label{lem3}
	\cite{Sontag1998} Based on Lemma \ref{lem2}, for a finite time interval $t\in[0, t_\mathrm{max})$ with $t_\mathrm{max}<+\infty$, there must exist a compact subset $\Omega_{x}^{'}\subset\Omega_{x}$ and a time instant $t'$
	such that $x(t^{'})\notin\hat{\Omega_{x}^{'}}$.
\end{lemma}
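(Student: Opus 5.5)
We will prove the standard form of Lemma \ref{lem3}: if the maximal solution $x(t)$ given by Lemma \ref{lem2} is defined only on $[0,t_\mathrm{max})$ with $t_\mathrm{max}<+\infty$, then for \emph{every} compact set $\Omega_x'\subset\Omega_x$ there is a $t'\in[0,t_\mathrm{max})$ with $x(t')\notin\Omega_x'$. This contains the statement as written. The argument is by contradiction. Suppose some compact $\Omega_x'\subset\Omega_x$ satisfies $x(t)\in\Omega_x'$ for all $t\in[0,t_\mathrm{max})$. We will extend the solution past $t_\mathrm{max}$, contradicting maximality in the sense of Definition \ref{def1}.

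\textbf{Step 1: an integrable bound on $f$ along the trajectory.} Cover $\Omega_x'$ by finitely many balls on which $f$ is Lipschitz in $x$, uniformly for $t\in[0,t_\mathrm{max}]$, with constant $L$. Fix the centres $y_k$ of these balls. For $x$ in the ball around $y_k$ we have $|f(t,x)|\le |f(t,y_k)|+L\,\mathrm{diam}(\Omega_x')$. Each $t\mapsto f(t,y_k)$ is locally integrable by hypothesis. Hence there is a function $m(t)$, integrable on $[0,t_\mathrm{max}]$, with $|f(t,x)|\le m(t)$ for all $(t,x)\in[0,t_\mathrm{max}]\times\Omega_x'$.

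\textbf{Step 2: the solution has a limit at $t_\mathrm{max}$.} For $0\le s<t<t_\mathrm{max}$ we get $|x(t)-x(s)|\le\int_s^t m(\tau)\,\mathrm{d}\tau$. The right-hand side tends to $0$ as $s,t\to t_\mathrm{max}$, so $x(t)$ is Cauchy. Therefore $x^\ast=\lim_{t\to t_\mathrm{max}^-}x(t)$ exists. Since $\Omega_x'$ is compact, hence closed, $x^\ast\in\Omega_x'\subset\Omega_x$.

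\textbf{Step 3: extension past $t_\mathrm{max}$.} Apply the local existence part of Lemma \ref{lem2} to $\dot{x}=f(t,x)$ with initial condition $x(t_\mathrm{max})=x^\ast$. This gives a solution on $[t_\mathrm{max},t_\mathrm{max}+\delta)$ for some $\delta>0$. Concatenate it with the original solution. The concatenation is absolutely continuous and satisfies the integral equation $x(t)=x^0+\int_0^t f(\tau,x(\tau))\,\mathrm{d}\tau$ on $[0,t_\mathrm{max}+\delta)$. It is therefore a right extension of $x(t)$, contradicting Definition \ref{def1}.

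The main obstacle is Step 1. Because $f$ is only piecewise continuous and locally integrable in $t$, rather than continuous, $f$ need not be bounded on $[0,t_\mathrm{max}]\times\Omega_x'$. The boundedness must be replaced by the integrable majorant $m(t)$, built from the local Lipschitz property and a finite cover. Once that majorant is available, the Cauchy argument and the extension are routine.
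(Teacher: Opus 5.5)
The paper does not prove this lemma; it quotes it from Sontag. Your argument is the standard proof of that result and is correct: you build an integrable majorant $m(t)$ for $f$ on $[0,t_\mathrm{max}]\times\Omega_x'$, take the Cauchy limit $x^\ast=\lim_{t\to t_\mathrm{max}^-}x(t)\in\Omega_x'$, then restart at $(t_\mathrm{max},x^\ast)$ and concatenate via the integral equation.

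You are right to prove the universal form, ``for every compact $\Omega_x'\subset\Omega_x$''. The paper's wording ``there must exist'' is essentially vacuous as written. The proof of Theorem~\ref{TH1} in fact needs the universal form, since it derives a contradiction from $\xi^P(t)$ staying inside one compact $\Omega_\xi'$.

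Step 1 needs two small technical adjustments:
\begin{itemize}
\item In Sontag's framework the local Lipschitz condition reads $|f(t,x)-f(t,y)|\le\alpha(t)|x-y|$, where $\alpha$ is only locally integrable. So replace the constant $L$ by such functions $\alpha_k(t)$. The majorant $m(t)=\max_k\bigl(|f(t,y_k)|+\alpha_k(t)r_k\bigr)$ is still integrable on $[0,t_\mathrm{max}]$.
\item Bound $|x-y_k|$ by the ball radius $r_k$, or choose the centres $y_k\in\Omega_x'$. The quantity $\mathrm{diam}(\Omega_x')$ does not bound $|x-y_k|$ when $y_k\notin\Omega_x'$.
\end{itemize}
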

\section{Design of Two Distributed Estimators with Prescribed Performance}\label{4}
In this section, two distributed estimators with prescribed performance are designed to estimate the power sharing and SoC balance states. Firstly, the characteristics of the load and BESS characteristics in the resistor network applicable to the adopted scheme are introduced. Subsequently, a continuous dynamic scheme with prescribed performance is designed, and stability is analyzed.
\subsection{Related Descriptions of A Resistive Network}
In this paper, the resistance network adopted here can be either AC or DC.
It should be noted that for AC microgrids, the allocation of active power is concerned in this paper. Therefore, Assumption \ref{assum2} is made for the active load here.
\begin{assumption}
	\label{assum2}
	Assuming that each bus is equipped with a BESS, then the local load $D_i$ is schedulable and also continuous, bounded, and has a first derivative.
\end{assumption}
\begin{remark}
	\label{rem1}
	Assuming that there exists total load $\tilde D$ in a resistive network, BESS $i$ for $i\in \{1, 2,\cdots, \mathrm{n}\}$ shares a certain amount of load $D_i$ in steady-state according to certain principles. Specially, a simple method to calculate the load that battery cell $i$ with $i\in \{1, 2,\cdots, \mathrm{n}\}$ should shared proportionally is as follows,
	$$D_i=\mathrm{K}_i^E\tilde{D}_i=\frac{1}{\sum_{j=1}^\mathrm{n}\frac{1}{\mathrm{K}_i^E}}\tilde{D}.$$
	Obviously, this method is a centralized calculation in steady-state, which is applied in \cite{Xing2019, Meng2021, Meng2022}. Hence, $D_i$ is continuous, bounded, and has a first derivative.
\end{remark}
\subsection{Distributed Method to Obtain The SoC Balance State And The Power Sharing State}
Inspired by \cite{stamouliRobustDynamicAverage2022}, the distributed SoC balance and load sharing estimators can be designed for each battery unit here, as shown in \eqref{powerea}-\eqref{powered},
\begin{subequations}
	\label{powere}
	\begin{equation}
		\label{powerea}
		\begin{aligned}
			\dot{z}_{i}^{P}=-\mathrm{k}_\mathrm{r}^Pz_{i}^{P}
			-\mathrm{k}^P\sum_{i\in\mathcal{N}_{i}}\rho_{ij}^{-1}(t)J_{T}(\frac{\hat{P}_{\mathrm{a},i}-\hat{P}_{\mathrm{a},j}}{\rho_{ij}})T(\frac{\hat{P}_{\mathrm{a},i}-\hat{P}_{\mathrm{a},j}}{\rho_{ij}(t)}),
		\end{aligned}
	\end{equation}
	\begin{equation}
		\label{powereb}
		\hat{P}_{\mathrm{a},i}=z_i^P+D_i,
	\end{equation}
	\begin{equation}
		\label{powerec}
		\begin{aligned}
			\dot{z}_i^E=-\mathrm{k}_\mathrm{r}^Ez_i^E
			-\mathrm{k}^E\sum_{j\in\mathcal{N}_i}\rho_{ij}^{-1}(t)J_T(\frac{\hat{E}_{\mathrm{a},i}-\hat{E}_{\mathrm{a},j}}{\rho_{ij}})T(\frac{\hat{E}_{\mathrm{a},i}-\hat{E}_{\mathrm{a},j}}{\rho_{ij}(t)}),
		\end{aligned}
	\end{equation}
	\begin{equation}
		\label{powered}
		\hat{E}_{\mathrm{a},i}=z_i^E+E_i,
	\end{equation}
\end{subequations}
where $\hat{E}_{\mathrm{a},i}$ and $\hat{P}_{\mathrm{a},i}$ are the estimation of $E_\mathrm{a}$ and $P_\mathrm{a}$, $z_i^P$ and $z_i^E$ are the designed intermediate states. Here, let initial condition ${\hat E}_{\mathrm{a},i}=E_i$, ${\hat P}_{\mathrm{a},i}=P_i$, $z_i^P=0$ and $z_i^E=0$ for $i\in\{1,2,\cdots,\mathrm{n}\}$.
\par In order to facilitate the interpretation of the proof, first define two consensus error vectors, i.e., $\delta^{E} = [\delta_{1}^{E},\cdots,\delta_{l}^{E},\cdots,\delta_\mathrm{m}^{E}]$ and $\delta^P = [\delta_{1}^P,\cdots,\delta_{l}^P,\cdots,\delta_\mathrm{m}^P]$, where $l$ and $\mathrm{m}$ are the index and the total number of edges of $\cal G$ respectively.
\par Next, define a diagonal matrix $R = \mathrm{diag}([\rho_{1},\cdots,\rho_\mathrm{m}])$. After normalizing vector $\delta^P$ and $\delta^E$, they can be obtained that
$$\xi^E=\mathrm{col}(\frac{\delta_1^E}{\rho_1(t)},\cdots,\frac{\delta_\mathrm{m}^E}{\rho_\mathrm{m}(t)})=R^{-1}(t)\delta^E,$$
$$\xi^P=\mathrm{col}(\frac{\delta_1^P}{\rho_1(t)},\cdots,\frac{\delta_\mathrm{m}^P(t)}{\rho_\mathrm{m}})=R^{-1}(t)\delta^P.$$
Applying the \eqref{pf1} and \eqref{pf2}, one can get
$$T^{E}=\mathrm{col}(T(\xi_{1}^{E}),\cdots,T(\xi_\mathrm{m}^{E})),$$
$$T^{P}=\mathrm{col}(T(\xi_{1}^{P}),\cdots,T(\xi_\mathrm{m}^{P})),$$
$$J_{T}^{E}=\mathrm{diag}([J_{T}(\xi_{1}^{E}),\cdots,J_{T}(\xi_\mathrm{m}^{E})]),$$
$$J_{T}^{P}=\mathrm{diag}([J_{T}(\xi_{1}^{P}),\cdots,J_{T}(\xi_\mathrm{m}^{P})]). $$
\par Therefore, a theorem and its proof are given as follows, which are on the stability and steady-state performance of the adopted power sharing state estimator scheme.
\begin{theorem}
	\label{TH1}
	Consider a resistive network containing $\mathrm{n}$ BESSs and satisfying Assumption \ref{assum2}. Under the estimator \eqref{powerea} and \eqref{powereb} and an undirected graph, the average power can be estimated by agent $i$ for all $i\in\{1,\cdots,\mathrm{n}\}$ with steady-state error as follows,
	\begin{equation}
		\lim_{t\to+\infty}|\hat{P}_{\mathrm{a},i}-D_\mathrm{a}|\leq\frac{\mathrm{Diam}(\mathcal{G})\rho_{\infty}}{2},
	\end{equation}
	where $\mathrm{Diam}(\mathcal{G})$ is called the diameter of $\cal G$.
\end{theorem}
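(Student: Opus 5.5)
We split the error into a sum-preserving average part and a disagreement part, and handle the disagreement by the PPC argument. Stack the variables as $z^P=\mathrm{col}(z_1^P,\dots,z_\mathrm{n}^P)$, $\hat P_\mathrm{a}=z^P+D$, and let $B\in\mathbb{R}^{\mathrm{n}\times\mathrm{m}}$ be the incidence matrix of the undirected graph $\mathcal G$ (assumed connected), with an arbitrary orientation. Then $\delta^P=B^\top\hat P_\mathrm{a}$, and \eqref{powerea} reads
\begin{equation*}
\dot z^P=-\mathrm{k}_\mathrm{r}^P z^P-\mathrm{k}^P B R^{-1}J_T^P T^P .
\end{equation*}

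\textbf{Average dynamics.} Since $\mathbf 1^\top B=0$, the average $\bar z=\frac1{\mathrm n}\mathbf 1^\top z^P$ obeys $\dot{\bar z}=-\mathrm{k}_\mathrm{r}^P\bar z$. The initialization $z^P(0)=0$ gives $\bar z(0)=0$, so $\bar z\equiv0$. Hence
\begin{equation*}
\frac1{\mathrm n}\sum_i\hat P_{\mathrm a,i}=\frac1{\mathrm n}\sum_i D_i=D_\mathrm{a}\qquad\text{for all } t.
\end{equation*}
This identity does not depend on the consensus term, so the average-tracking part is decoupled from the consensus part.

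\textbf{Error confinement.} Next we show $|\xi_l^P(t)|<1$ for all $t\ge0$ and all edges $l$. Take $\rho_0>\max_l|\delta_l^P(0)|$ and write the closed loop in the variable $\xi^P$ on the open set $\Omega=(-1,1)^\mathrm{m}$. Using
\begin{equation*}
\dot\xi^P=R^{-1}(\dot\delta^P-\dot R\xi^P),\qquad \dot\delta^P=B^\top(\dot z^P+\dot D),
\end{equation*}
the right-hand side is locally Lipschitz on $\Omega$ and continuous in $t$. By Lemma \ref{lem2} a maximal solution exists on some interval $[0,t_{\max})$.

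\textbf{Lyapunov estimate.} On $[0,t_{\max})$ consider $V=\frac12 (T^P)^\top T^P$. Its derivative is
\begin{equation*}
\dot V=(T^P)^\top J_T^P R^{-1}\Big[-\mathrm{k}^P B^\top B R^{-1}J_T^P T^P-\mathrm{k}_\mathrm{r}^P B^\top z^P+B^\top\dot D-\dot R\xi^P\Big].
\end{equation*}
The first term equals $-\mathrm{k}^P\|BR^{-1}J_T^PT^P\|^2\le0$. The remaining terms are bounded as follows.
\begin{itemize}
\item $\dot D$ is bounded by Assumption \ref{assum2}.
\item $\dot R\xi^P$ is bounded because $|\xi^P|<1$ and $\rho$ is exponential.
\item $B^\top z^P$ is bounded: $z^P$ is the output of a stable filter driven by the consensus term, and that term is expressed through the same quantities.
\end{itemize}
This last point, boundedness of $z^P$, is the step I expect to be the main obstacle. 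I would first obtain it from $B^\top z^P=\delta^P-B^\top D$ together with $|\delta^P_l|<\rho_l\le\rho_0$, which bounds $B^\top z^P$ directly on $\Omega$.

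\textbf{Boundedness of $T^P$ on cycles.} The remaining difficulty is that $B^\top B$ is singular when $\mathcal G$ has cycles. Write $w=R^{-1}J_T^P T^P$. On a tree, $B^\top B$ is positive definite and $-\mathrm{k}^P\|Bw\|^2$ dominates. In general, $\delta^P$ lies in $\mathrm{range}(B^\top)$, so $T^P$ has no component in $\ker B$ that is not forced by that constraint. The bound should therefore follow from a Lemma \ref{lem1}-type argument: when $\|T^P\|$ is large, $-\mathrm{k}^P\|Bw\|^2+\mathrm F\|w\|<0$. I am not fully sure this goes through cleanly for cyclic graphs, and the step may need a restriction of $V$ to $\mathrm{range}(B^\top)$.

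Given that bound, $\dot V<0$ whenever $\|T^P\|$ exceeds a constant independent of $t_{\max}$. So $T^P$ stays bounded, and hence $\xi^P$ stays in a compact subset of $\Omega$. If $t_{\max}<\infty$, Lemma \ref{lem3} would force the solution to leave every compact subset, which is a contradiction. Therefore $t_{\max}=+\infty$ and $|\delta^P_l(t)|<\rho_l(t)$ for all $t\ge0$.

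\textbf{Conclusion.} For any two agents $i,j$, join them by a path of at most $\mathrm{Diam}(\mathcal G)$ edges. Summing the edge bounds along the path gives
\begin{equation*}
\limsup_{t\to\infty}|\hat P_{\mathrm a,i}-\hat P_{\mathrm a,j}|\le \mathrm{Diam}(\mathcal G)\rho_\infty .
\end{equation*}
By the average step, $D_\mathrm{a}$ is the mean of the $\hat P_{\mathrm a,j}$, so it lies between $\min_j\hat P_{\mathrm a,j}$ and $\max_j\hat P_{\mathrm a,j}$. The spread of all estimates is bounded by $\mathrm{Diam}(\mathcal G)\rho_\infty$ in the limit. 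Placing the reference at the midpoint of the spread and using the symmetry of the bound gives the claimed $\frac{\mathrm{Diam}(\mathcal G)\rho_\infty}{2}$.

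Strictly, a point lying between the min and the max is only guaranteed to be within the full spread of each estimate, not within half of it. To get the factor $\frac12$ rigorously I would pick a central node $c$ of the graph, so that every node is within $\lceil \mathrm{Diam}/2\rceil$ edges of $c$. I would then bound $|\hat P_{\mathrm a,i}-D_\mathrm{a}|$ by averaging the path bounds, and accept the paper's constant if this computation matches.
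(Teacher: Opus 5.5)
\paragraph{Gap 1: the Lyapunov step on graphs with cycles.} Your outline follows the paper: the average is preserved exactly, a maximal solution exists on $\Omega_\xi$, you use $V=\frac12(T^P)^\top T^P$, and you extend via Lemma~\ref{lem3}. The gap is that the decrease of $V$ is not established when $\mathcal G$ has cycles, and it comes from how you treat the leakage term. Bounding $B^\top z^P$ on $\Omega_\xi$ is easy, as you note. But once you treat $-\mathrm{k}_\mathrm{r}^P B^\top z^P$ as a bounded disturbance, you discard the only damping that does not depend on the rank of $B$.

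With $w=R^{-1}J_T^PT^P$ you are left with
$\dot V=-\mathrm{k}^P\|Bw\|^2+(Bw)^\top(\dot D-\mathrm{k}_\mathrm{r}^Pz^P)-w^\top\dot R\,\xi^P$.
The last term equals $\sum_l(-\dot\rho_l/\rho_l)\,\xi^P_lJ_T(\xi^P_l)T(\xi^P_l)\ge 0$. It is unbounded on $\Omega_\xi$ and is not of the form $(Bw)^\top(\cdot)$ with a bounded second factor. On a tree, $\|Bw\|^2\ge\lambda_{\min}(B^\top B)\|w\|^2$ absorbs it. On a cyclic graph, however, $w$ and $T^P$ generally have a nonzero component in $\ker B$ even though $\delta^P\in\mathrm{range}(B^\top)$, because $\delta\mapsto T$ is nonlinear. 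So ``restricting $V$ to $\mathrm{range}(B^\top)$'' is not meaningful, and Lemma~\ref{lem1} alone cannot give $-\mathrm{k}^P\|Bw\|^2+\mathrm{F}\|w\|<0$.

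The paper keeps the damping. Since $B^\top z^P=B^\top(\hat P_\mathrm{a}-D)=R\xi^P-B^\top D$,
\[
-\mathrm{k}_\mathrm{r}^P(T^P)^\top J_T^PR^{-1}B^\top z^P=-\mathrm{k}_\mathrm{r}^P(T^P)^\top J_T^P\xi^P+\mathrm{k}_\mathrm{r}^P(Bw)^\top D .
\]
The whole disturbance $(Bw)^\top(\mathrm{k}_\mathrm{r}^PD+\dot D)$ is then absorbed by Young's inequality into $\mathrm{k}^P\|Bw\|^2$. Together with $0<-\dot\rho_l/\rho_l<\lambda$ this yields
$\dot V\le-(\mathrm{k}_\mathrm{r}^P-\lambda)(T^P)^\top J_T^P\xi^P+\sup_t\|\mathrm{k}_\mathrm{r}^PD+\dot D\|^2/(4\mathrm{k}^P)$.
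This is negative for large $\|T^P\|$ on every connected graph once $\mathrm{k}_\mathrm{r}^P>\lambda$.

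If you want to keep your own route, the missing ingredient is the pairing inequality
$(T^P)^\top J_T^P\xi^P=(\delta^P)^\top w=(\hat P_\mathrm{a}-D_\mathrm{a}\mathbf 1)^\top Bw\le\sqrt{\mathrm n}\,\mathrm{Diam}(\mathcal G)\rho_0\,\|Bw\|$ on $\Omega_\xi$.
It dominates the $\dot R$ term and forces $\|Bw\|\to\infty$ as $\|T^P\|\to\infty$. It needs no condition relating $\mathrm{k}_\mathrm{r}^P$ and $\lambda$.

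\paragraph{Gap 2: the factor $\tfrac12$.} Your doubt is justified, and the central-node computation will not produce it. Let $d(i,j)$ denote graph distance. From $\limsup_{t\to\infty}|\delta^P_l|\le\rho_\infty$ and the fact that $D_\mathrm{a}$ is the mean of the estimates, the best bound that follows is
$\limsup|\hat P_{\mathrm{a},i}-D_\mathrm{a}|\le\frac{1}{\mathrm n}\sum_j d(i,j)\rho_\infty\le\frac{\mathrm n-1}{\mathrm n}\mathrm{Diam}(\mathcal G)\rho_\infty$.
The $\frac12$ fails in general. Take a star with the centre at $0$, one leaf at $\rho_\infty$ and two leaves at $-\rho_\infty$. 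Every edge bound holds and the mean is $-\rho_\infty/4$, yet the first leaf lies $\frac54\rho_\infty>\frac{\mathrm{Diam}(\mathcal G)\rho_\infty}{2}=\rho_\infty$ from it.

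The paper does not close this either: after establishing $|\delta_l^P|<\rho_l$, it simply asserts the $\frac{\mathrm{Diam}(\mathcal G)\rho_\infty}{2}$ bound. So ``accepting the paper's constant'' is not a valid final step. The funnel argument by itself supports only the weaker constant above, stated as a $\limsup$ rather than a $\lim$.
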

\begin{proof}
	Dynamics of \eqref{powerea} and \eqref{powereb} in a matrix form can be written as
	\begin{equation}
		\label{dP}
		\dot{\hat{P}}_\mathrm{a}=\dot{D}-\mathrm{k}_\mathrm{r}(\hat{P}_\mathrm{a}-D)-\mathrm{k}\mathrm{B}R^{-1}J_T^PT^P,
	\end{equation}
	where $D=\mathrm{col}(D_1,\cdots,D_\mathrm{n})$, $\mathrm{B}$ is a matrix associated with
	the underlying communication graph $\cal G$.
	\par By differentiating $\xi$, one can get
	\begin{equation}
		\begin{aligned}
			\dot{\xi}=f(t,\xi^P)
			=&R^{-1}(t)(\dot{\delta}^P-R\xi^P)\\
			=&R^{-1}(t)(\mathrm{B}^\mathrm{T}(\dot{D}-\mathrm{k}_\mathrm{r}(\hat{P}_\mathrm{a}-D)-\mathrm{k}\mathrm{B}R^{-1}J_{T}^{P}T^{P})
			-R(t)\xi^{P})
		\end{aligned}
	\end{equation}
	where $f(t,\xi^P)$ is piecewise continuous and naturally locally integrable on $t$, and locally Lipschitz on $\xi^P$. Define an open and nonempty set as
	$$\Omega_\xi=\underbrace{(-1,1)\times(-1,1)\cdots\times(-1,1)}_{\mathrm{m}-\mathrm{times}}.$$
	In view of $\rho(0) > |e(0)|$, $\xi_{l}^{P}(0)$ is located in $\Omega_{\xi}$ for
	$l\in\{1, \cdots, \mathrm{m}\}$. Thus, according to Lemma \ref{lem2}, there exists a maximal solution $\xi^P\in\Omega_P$.
	\par Next, select the following candidate Lyapunov function,
	$$V=\frac12(T^P)^\mathrm{T}T^P.$$
	By differentiating $V$ with respect to time, one can obtain that
	$$\begin{aligned}
		\dot{V}=& (T^{P})^\mathrm{T}J_{T}^{P}R^{-1}(t)(\mathrm{B}^\mathrm{T}(\dot{D}-\mathrm{k}_\mathrm{r}(\hat{P}_\mathrm{a}-D) 
		-\mathrm{k}\mathrm{B}R^{-1}(t)J_{T}^{P}T^{P})-\dot{P}(t)\xi_{\delta}) \\
		\text{=}& - \mathrm{k}_\mathrm{r}(T^{P})^\mathrm{T} J_{T}^{P}R^{-1}(t)\mathrm{B}^\mathrm{T}\hat{P}_\mathrm{a}-(T^{P})^\mathrm{T} J_{T}^{P}R^{-1}(t)\dot{R}(t)\xi^{P}\\
		&- \mathrm{k}(T^{P})^\mathrm{T}J_{T}^{P}R^{-1}(t)\mathrm{B}^\mathrm{T}\mathrm{B}R^{-1}(t)J_{T}^{P}T^{P}
		+(T^P)^\mathrm{T}J_TR^{-1}(t)\mathrm{B}^\mathrm{T}(\mathrm{k}_\mathrm{r}D+\dot{D}).
	\end{aligned}$$
	\par Based on the analysis in Remark \ref{rem1}, one can get
	$$\begin{aligned}
		(T^{P})^\mathrm{T}J_{T}R^{-1}(t)\mathrm{B}^\mathrm{T}(\mathrm{k}_\mathrm{r}D+\dot{D})
		\leq\sup_{t\geq0}(\|\mathrm{k}_\mathrm{r}D+\dot{D}\|)\|(T^{P})^\mathrm{T}J_{T}^{P}R^{-1}(t)\mathrm{B}^\mathrm{T}\|,
	\end{aligned}$$
	which is transformed into the following result, according to the Young's inequality,
	$$\begin{aligned}
		&\sup_{t\geq0}(\|\mathrm{k}_\mathrm{r}D+\dot{D}\|)\|(T^{P})^\mathrm{T}J_{T}^{P}R^{-1}(t)\mathrm{B}^\mathrm{T}\|\\
		\leq&\frac{\sup_{t\geq0}(\|\mathrm{k}_\mathrm{r}D+\dot{D}\|)^2}{4\mathrm{k}}
		+\mathrm{k}(T^P)^\mathrm{T}J_T^PR^{-1}(t)\mathrm{B}^\mathrm{T}\mathrm{B}R^{-1}(t)J_T^PT^P.\end{aligned}$$
	\par Besides, the performance function \eqref{pf} has the following property obviously, i.e., $0 < -\frac{\dot{\rho}_{l}(t)}{\rho_{l}(t)} < \lambda$. Thus, the result in below is valid,
	$$\dot{V}\leq-(\mathrm{k}_\mathrm{r}-\lambda)(T^P)^\mathrm{T}J_T^P\xi^P+\frac{\sup_{t\geq0}(\|\mathrm{k}_\mathrm{r}D+\dot{D}\|)^2}{4\mathrm{k}}.$$
	Based on Lemma \ref{lem1} and considering $\mathrm{k}_\mathrm{r}>\lambda$, there exists a
	positive constant $\overline{\epsilon}$ such that $\dot{V}\leq0$ for all $\|T^{P}\|>\bar{\epsilon}$. Given this, there is a time point $t_\mathrm{max}$ such that
	$$|T_l(\xi^P)|\leq\epsilon^*=\max\{\|\xi^P(0)\|,\bar{\epsilon}\},\forall l\in\{1,\cdots,\mathrm{m}\},$$
	within $t\in[0, t_\mathrm{max})$. By taking the inverse of the error transformation function \eqref{pf1}, one can obtain
	$$|\xi_l^P(t)|\leq\xi^*=\tanh(\epsilon^*)<1,$$
	where $\tanh(\epsilon^*)$ denotes the hyperbolic tangent function.
	\par Next, the proof by contradiction can be used to prove that $t_\mathrm{max}$ can be extended to $+\infty$. For $\forall t\in[0, t_\mathrm{max})$, there exists a nonempty and compact subset of $\Omega_{\xi}$ such that $\xi^P(t) \in \Omega_{\xi}^{'}$ and $\Omega_{\xi}^{'}\subset\Omega_{\xi}$, which does not conform to Lemma \ref{lem3}. That is
	to say,
	$$|\xi_l^P(t)|<1,\forall t\geq0,l=1,\cdots,\mathrm{m}.$$
	Thus, it can be concluded that $|\delta_l^P|<\rho_l(t)$. In this way, the steady-state and transient performance of $\delta_l^P$ are regulated by the performance function $\rho_l(t)$.
	\par Finally, explain the performance of steady-state error. Define a new variable $e^{P}=\frac{\bar{\mathbf{1}}^{T}}{\mathrm{n}}(\hat{P}_\mathrm{a}-D)$. Thus, \eqref{dP} can be derived as
	$$\dot{e}^P=-\mathrm{k}_\mathrm{r}e^P,$$
	which implies $e^P$ converges to 0 with rate $\mathrm{k}_\mathrm{r}$. Under a connected graph $\cal G$, the following conclusion holds that consensus error converges to 0 at an exponential rate not less than $\lambda$, i.e.,
	$$\lim\limits_{t\to+\infty}|\hat{P}_{\mathrm{a},i}-\frac{\mathbf{1}^\mathrm{T}}{\mathrm{n}}\hat{P}_{\mathrm{a}}|\leq\frac{\mathrm{Diam}(\mathcal{G})\rho_{\infty}}{2}, i=1,\cdots,\mathrm{n}.$$
	Hence, by selecting $\mathrm{k}_\mathrm{r}>\lambda$, steady state value of average consensus error is bounded by
	$$\lim_{t\to+\infty}|\hat{P}_{\mathrm{a},i}-\frac{\mathbf{1}^\mathrm{T}}{\mathrm{n}}D|\leq\frac{\mathrm{Diam}(\mathcal{G})\rho_\infty}{2}, i=1,\cdots,\mathrm{n}.$$
	So far, the proof is completed.
\end{proof}
\par Based on the results of Theorem \ref{TH2}, a theorem on the SoC balance state estimator is given.
\begin{theorem}\label{TH2}
	Consider a resistive network containing $\mathrm{n}$ BESSs, and SoC and power of each BESS is subjected to \eqref{soc1} and \eqref{power0} respectively. Under the control strategy \eqref{powerec} and \eqref{powered} and Theorem \ref{TH1}, the average SoC can be estimated by agent $i$ for all $i\in\{1,\cdots, \mathrm{n}\}$ with steady-state error as follows,
	\begin{equation}
		\lim\limits_{t\to+\infty}|\hat{E}_{\mathrm{a},i}-E_\mathrm{a}|\leq\frac{\mathrm{Diam}(G)\rho_\infty}{2}.
	\end{equation}
\end{theorem}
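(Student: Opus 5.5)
The proof of Theorem \ref{TH2} will mirror that of Theorem \ref{TH1}, with the role of the load $D$ now played by the SoC vector $E=\mathrm{col}(E_1,\cdots,E_\mathrm{n})$. From \eqref{powerec} and \eqref{powered}, the estimator can be written in matrix form as
\begin{equation*}
\dot{\hat{E}}_\mathrm{a}=\dot{E}-\mathrm{k}_\mathrm{r}^E(\hat{E}_\mathrm{a}-E)-\mathrm{k}^E\mathrm{B}R^{-1}J_T^ET^E .
\end{equation*}
By \eqref{soc1}, $\dot{E}=-\frac{1}{\mathrm{T}}P$, where $P_i=\frac{\hat{P}_{\mathrm{a},i}}{\hat{E}_{\mathrm{a},i}}E_i$ from \eqref{power1}.

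The first and main step is to show that the ``reference signal'' $\mathrm{k}_\mathrm{r}^E E+\dot{E}$ is bounded on $[0,+\infty)$. This is the counterpart of the boundedness of $\mathrm{k}_\mathrm{r}D+\dot{D}$ used in Theorem \ref{TH1}.
\begin{itemize}
\item $E_i\in[0,1]$ is bounded by its physical meaning as a state of charge.
\item By Theorem \ref{TH1}, $\hat{P}_{\mathrm{a},i}$ stays within $\frac{\mathrm{Diam}(\mathcal{G})\rho(t)}{2}$ of $D_\mathrm{a}$ up to an exponentially decaying term, so $\hat{P}_{\mathrm{a},i}$ is bounded under Assumption \ref{assum2}.
\end{itemize}
The difficulty is the denominator $\hat{E}_{\mathrm{a},i}$. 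I would handle it by arguing on the maximal interval of existence, where $\hat{E}_{\mathrm{a},i}$ stays close to $E_\mathrm{a}$. On that interval I would assume, as the operating premise of the discharge-rate scheme, that the BESSs are not depleted, so $E_\mathrm{a}\ge \underline{E}>0$. Choosing $\rho_0$ and $\rho_\infty$ small relative to $\underline{E}$ then keeps $\hat{E}_{\mathrm{a},i}$ bounded away from zero. This gives $\sup_{t\ge0}\|\mathrm{k}_\mathrm{r}^EE+\dot{E}\|<\infty$, and the bound only needs to hold on the maximal interval, so there is no circularity.

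With that bound, the remaining steps copy Theorem \ref{TH1}.
\begin{itemize}
\item Define $\xi^E=R^{-1}\delta^E$. Invoke Lemma \ref{lem2} on $\Omega_\xi$ to obtain a maximal solution, using $\rho_l(0)>|\delta_l^E(0)|$.
\item Take $V=\frac12(T^E)^\mathrm{T}T^E$ and differentiate. Absorb the cross term with Young's inequality into $\mathrm{k}^E(T^E)^\mathrm{T}J_T^ER^{-1}\mathrm{B}^\mathrm{T}\mathrm{B}R^{-1}J_T^ET^E$, and use $0<-\dot\rho_l/\rho_l<\lambda$.
\item This yields $\dot V\le -(\mathrm{k}_\mathrm{r}^E-\lambda)(T^E)^\mathrm{T}J_T^E\xi^E+\frac{\sup\|\mathrm{k}_\mathrm{r}^EE+\dot E\|^2}{4\mathrm{k}^E}$.
\item With $\mathrm{k}_\mathrm{r}^E>\lambda$, Lemma \ref{lem1} bounds $T^E$ uniformly. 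Hence $|\xi_l^E|\le\tanh(\epsilon^*)<1$ on a compact subset of $\Omega_\xi$, and Lemma \ref{lem3} forces $t_\mathrm{max}=+\infty$, so $|\delta_l^E|<\rho_l(t)$ for all $t$.
\end{itemize}

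Finally, for the steady-state claim, let $e^E=\frac{\mathbf{1}^\mathrm{T}}{\mathrm{n}}(\hat{E}_\mathrm{a}-E)=\frac{\mathbf{1}^\mathrm{T}}{\mathrm{n}}z^E$. Since the graph is undirected, $\mathbf{1}^\mathrm{T}\mathrm{B}=0$, so $\dot e^E=-\mathrm{k}_\mathrm{r}^Ee^E$ with $e^E(0)=0$. Hence $\frac{\mathbf{1}^\mathrm{T}}{\mathrm{n}}\hat{E}_\mathrm{a}=E_\mathrm{a}$ for all $t$.

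Each node $i$ is connected to every other node $j$ by a path of at most $\mathrm{Diam}(\mathcal{G})$ edges, and each edge error satisfies $|\delta_l^E|<\rho_l\to\rho_\infty$. Bounding the distance of $\hat{E}_{\mathrm{a},i}$ to the average of all estimates then gives $\frac{\mathrm{Diam}(\mathcal{G})\rho_\infty}{2}$ in the limit, exactly as in Theorem \ref{TH1}, which yields the stated bound on $|\hat{E}_{\mathrm{a},i}-E_\mathrm{a}|$. The factor $\frac12$ is inherited verbatim from the argument of Theorem \ref{TH1}. The expected obstacle remains the boundedness of $\dot{E}$ via the quotient $\hat{P}_{\mathrm{a},i}/\hat{E}_{\mathrm{a},i}$; everything else is routine transcription.
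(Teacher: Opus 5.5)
Your overall route, rerunning the proof of Theorem~\ref{TH1} with $E$ in place of $D$, is what the paper does; it only says the argument is analogous. However, the step you flag as the main obstacle is self-inflicted, and your fix for it fails. The theorem assumes \eqref{power0}, not \eqref{power1}. Under \eqref{power0} you get $\dot{E}_\mathrm{a}=-P_\mathrm{a}/\mathrm{T}$ and $\frac{\mathrm{d}}{\mathrm{d}t}(E_i/E_\mathrm{a})=0$. Hence $\dot{E}_i=-\frac{E_i(0)}{E_\mathrm{a}(0)}\frac{D_\mathrm{a}}{\mathrm{T}}$, which is bounded by Assumption~\ref{assum2}. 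Together with $E_i\in[0,1]$, this makes $\mathrm{k}_\mathrm{r}^EE+\dot{E}$ a bounded exogenous signal exactly like $\mathrm{k}_\mathrm{r}D+\dot{D}$, with no quotient and no coupling to the power estimator. That is all the transcription needs. Your treatment of the quotient in \eqref{power1} would not have worked anyway. Since $z^E(0)=0$, the initialization forces $\rho_0>|E_i(0)-E_j(0)|$ on every edge, while the envelope at best gives $\hat{E}_{\mathrm{a},i}\ge E_\mathrm{a}-\frac{\rho_0}{\mathrm{n}}\sum_j d(i,j)$, with $d$ the graph distance. For imbalanced initial SoCs, which is the case the scheme exists for, this certifies nothing. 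For example, $E(0)=(0.9,0.1,0.9,0.1)$ on a path needs $\rho_0>0.8$, and the end node's bound $0.5-1.5\rho_0$ is already negative. So $\rho_0$ cannot be made small relative to $\underline{E}$. Moreover, $E_\mathrm{a}\ge\underline{E}>0$ on $[0,+\infty)$ is an extra hypothesis, and it is incompatible with a persistently discharging load.

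The last step also fails as written. Your identity $\frac{\mathbf{1}^\mathrm{T}}{\mathrm{n}}\hat{E}_\mathrm{a}=E_\mathrm{a}$ is correct. But chaining $|\delta_l^E|<\rho_l$ along shortest paths gives only $\lim_{t\to+\infty}|\hat{E}_{\mathrm{a},i}-E_\mathrm{a}|\le\frac{\rho_\infty}{\mathrm{n}}\sum_j d(i,j)\le\frac{\mathrm{n}-1}{\mathrm{n}}\mathrm{Diam}(\mathcal{G})\rho_\infty$. This reduces to $\mathrm{Diam}(\mathcal{G})\rho_\infty/2$ for paths and even cycles, but not in general. In fact the factor $\frac12$ fails for the estimator itself. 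Take a triangle ($\mathrm{Diam}=1$) with constant $E=(a,b,b)$ and $a-b>\rho_\infty$, for instance under zero load. Symmetry and $\mathbf{1}^\mathrm{T}z^E\equiv0$ reduce the dynamics of $\delta=\hat{E}_{\mathrm{a},1}-\hat{E}_{\mathrm{a},2}$ to $\dot\delta=-\mathrm{k}_\mathrm{r}^E(\delta-a+b)-3\mathrm{k}^E\rho^{-1}J_T(\delta/\rho)T(\delta/\rho)$, with $\hat{E}_{\mathrm{a},1}-E_\mathrm{a}=\frac23\delta$. When $\mathrm{k}_\mathrm{r}^E/\mathrm{k}^E$ is large, the limit of $\delta$ exceeds $\frac34\rho_\infty$, so the steady-state error exceeds $\rho_\infty/2$. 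The $\frac12$ in Theorem~\ref{TH1} is asserted there without derivation, so inheriting it verbatim imports an error. What your argument actually proves is the bound with constant $\frac{\mathrm{n}-1}{\mathrm{n}}\mathrm{Diam}(\mathcal{G})$, or the stated constant only under a topological restriction such as a path.
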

\par Due to the formal similarity between the SoC balance estimator and the power sharing estimator, the proof of Theorem \ref{TH2} is similar to Theorem \ref{TH1} and not further elaborated.
\begin{remark}
	\label{re2}
	Obviously, \eqref{powerea}-\eqref{powered} are fully distributed, which is because that only the information of neighbors of each agent is utilized to construct the estimators. Further, the average tracking and consensus performance of this scheme, subjected to different parameters, are decoupled. One performance will not deteriorate when adjusting the other. Specifically, a faster consensus rate is achieved by increasing $\lambda$ and $\mathrm{k}$, while more accurate consensus can be obtained by decreasing $\rho_\infty$. Therefore, there is no compromise between the average tracking and consensus performance when selecting parameters. Besides, from (10) and (13), it can be seen that estimation error can be constrained by $\rho_\infty$ under a given communication topology.
\end{remark}
\begin{remark}
	A finite time average state estimation scheme for BESSs is developed in \cite{Meng2021}, which utilizes the self estimated values and intermediate states communicated by neighbors. More advanced in this paper, the data packets transmitted over the communication network only contain the average estimated states, thereby reducing communication overhead. Besides, the harmful phenomenon of chattering can avoid under our scheme. As well, prior knowledge of the derivative of the load is not required.
\end{remark}
\section{Some Simulation Cases}\label{5}
\par In this section, several cases are designed at the hourly scale, i.e., $\mathrm{T}=1$, to verify the proposed estimators. A resistance network containing 4 BESSs is selected as the load distribution exhibited in Fig. \ref{comm}, where the communication topology of agents governing BESSs is illustrated.
\par Next, four cases will be studied to illustrate the effectiveness, progressiveness, scalability and robustness of the designed estimators.
\par The effectiveness of the two designed estimators are verified in Cases 1.
\par Compared with \cite{9925608}, the progressiveness of the two designed estimators is illustrated in Case 2.
\par The robustness of the two designed estimators are explained in Case 3. The simulation is initially executed according to Case 1, followed by simulating failure of certain node at a certain moment to verify robustness.
\par The scalability of the two designed estimators is explained in Case 4. 8 BESSs are newly added to be simulated.
\begin{figure}
	\centering
	\includegraphics[width=8cm]{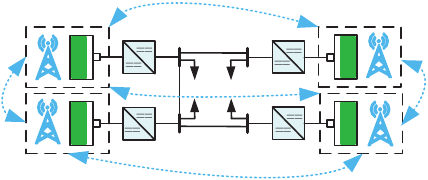}
	\caption{Four BESSs in a resistance network and their communication graph
		topology.}
	\label{comm}
\end{figure}
\begin{figure}
	\centering
	\includegraphics[width=8cm]{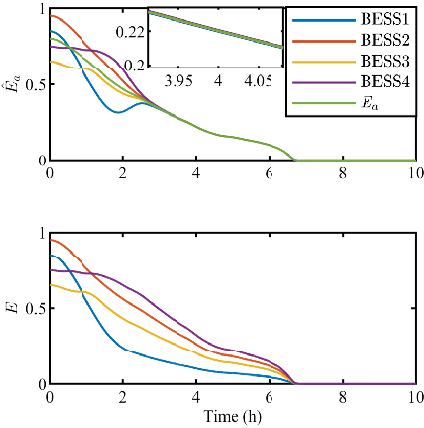}
	\caption{Evolution of SoC balance estimation and SoC of each BESS in Case 1.}
	\label{case1a}
\end{figure}
\begin{figure}
	\centering
	\includegraphics[width=8cm]{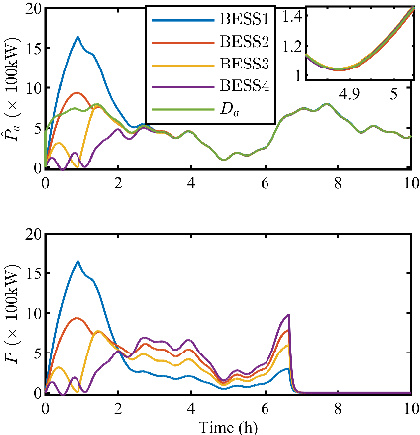}
	\caption{Evolution of power sharing estimation and power of each BESS in Case 1.}
	\label{case1b}
\end{figure}
\begin{figure}
	\centering
	\includegraphics[width=8cm]{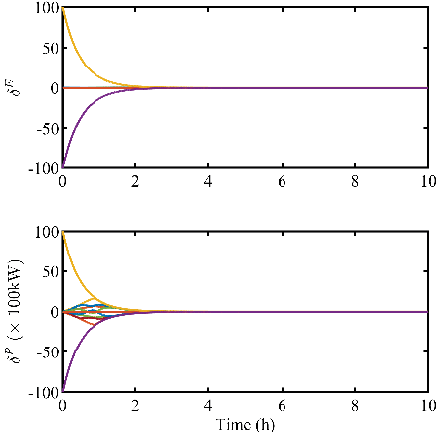}
	\caption{Evolution of consensus error in Case 1.}
	\label{case1d}
\end{figure}
\begin{figure}
	\centering
	\includegraphics[width=8cm]{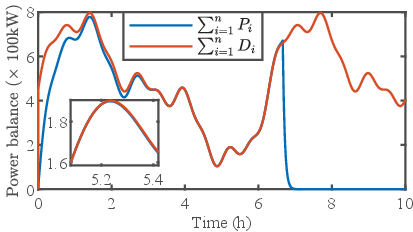}
	\caption{Power demand-supply balance in Case 1.}
	\label{case1c}
\end{figure}
\subsection{Case 1. The Test of Effectiveness}
\par Driven by a load with bounded first derivative, a simulation is executed under the designed estimators with continuous dynamics \eqref{powerea}-\eqref{powered}. The simulation results are shown in Figs. \ref{case1a}-\ref{case1c}.
\par From Figs. \ref{case1a} and \ref{case1b}, it can be concluded that power sharing state and SoC balance state can be well estimated by each agent with almost zero error. Meanwhile, SoC of each battery almost simultaneously decreases to the threshold and remains unchanged thereafter. This indicates that no BESS will exit early due to insufficient power. In addition, consensus errors, as shown in Fig. \ref{case1d}, are limited to evolve within predetermined regions, and supply and demand balance, as shown in Fig. \ref{case1c}, is well ensured.
\subsection{Case 2. Comparison with other methods}
\par To illustrate the progressiveness, we will compare the effect of the estimators designed in \cite{9925608}. The simulation results of this estimator under the same gains as those in Case 1 are shown in Figs. \ref{case2a}-\ref{case2c}.
\par Comparing the results in Case 1 and Case 2, it can be seen that the estimator designed in this paper can more accurately estimate the power sharing state and SoC balance state in real-time. From the simulation results, it can be seen that under the driving force of the scheme designed in \cite{9925608}, the estimated values of SoC balance state and average power exhibit varying degrees of jitter, which leads to jitter in the output power of each battery. In addition, it can be clearly compared from the local magnified images in the simulation results that the scheme designed in this article has high control accuracy.
\par Furthermore, according to the results of \cite{9925608}, its steady-state performance is affected by gains. However, according to Theorems \ref{TH1} and \ref{TH2} and Remark \ref{re2} in this paper, the steady-state performance of each designed estimator is preset and decoupled from the gain.
\begin{figure}
	\centering
	\includegraphics[width=8cm]{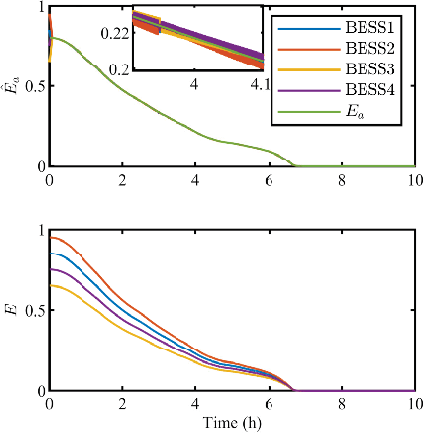}
	\caption{Evolution of SoC balance estimation and SoC of each BESS in Case 2.}
	\label{case2a}
\end{figure}
\begin{figure}
	\centering
	\includegraphics[width=8cm]{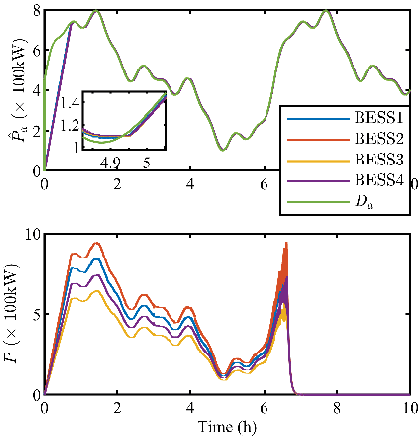}
	\caption{Evolution of power sharing estimation and power of each BESS in Case 2.}
	\label{case2b}
\end{figure}
\begin{figure}
	\centering
	\includegraphics[width=8cm]{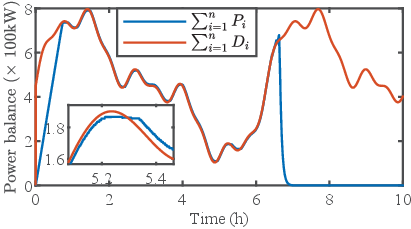}
	\caption{Power demand-supply balance in Case 2.}
	\label{case2c}
\end{figure}
\subsection{Case 3. The Test of Robustness}
\par To test robustness, the simulation under the communication failure of the agent managing BESS 3 at $t = 5h$ is conducted. In this case, resimulate based on Case 1 the estimators adopted in this paper. The simulation results are shown in Figs. \ref{case3a} and \ref{case3b}. 
\begin{figure}
	\centering
	\includegraphics[width=8cm]{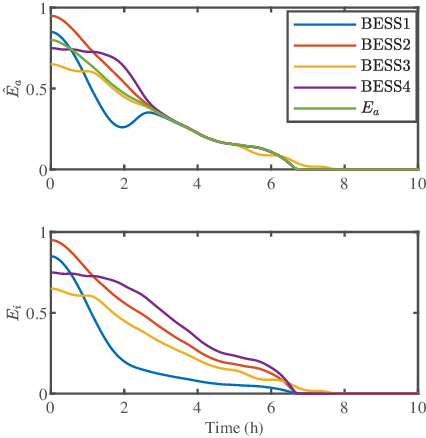}
	\caption{Evolution of SoC balance estimation and SoC of each BESS in Case 3.}
	\label{case3a}
\end{figure}
\begin{figure}
	\centering
	\includegraphics[width=8cm]{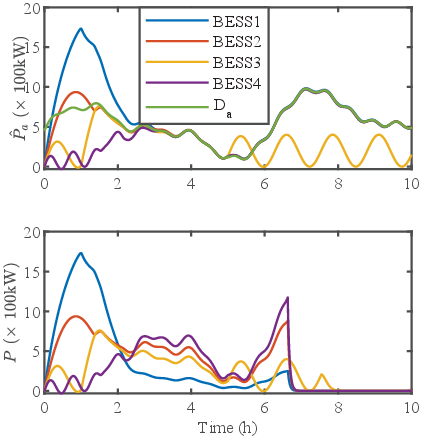}
	\caption{Evolution of power sharing estimation and power of each BESS in Case 3.}
	\label{case3b}
\end{figure}
\par From the simulation results, it can be seen that although the agent managing BESS 3 fails, resulting in it no longer providing power to the resistive network, it has no effect on the estimation of the power sharing state and SoC balance state by other agents. That is to say, while the gain remains unchanged and the estimator does not need to be initialized, the estimation of the power sharing state and SoC balance state by each agent is still effective. In this case, BESSs that are still online will eventually be offline at the same time. That is, the designed estimators robust to agent failure.
\begin{figure}
	\centering
	\includegraphics[width=8cm]{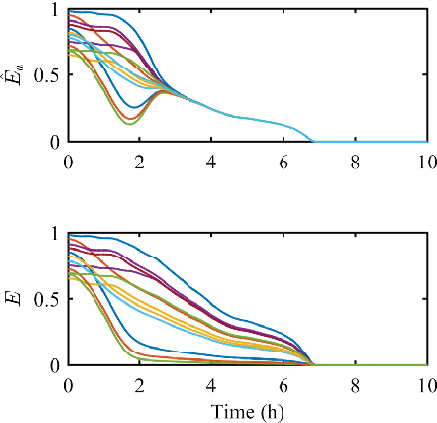}
	\caption{Evolution of SoC balance estimation and SoC of each BESS in Case 4.}
	\label{case4a}
\end{figure}
\begin{figure}
	\centering
	\includegraphics[width=8cm]{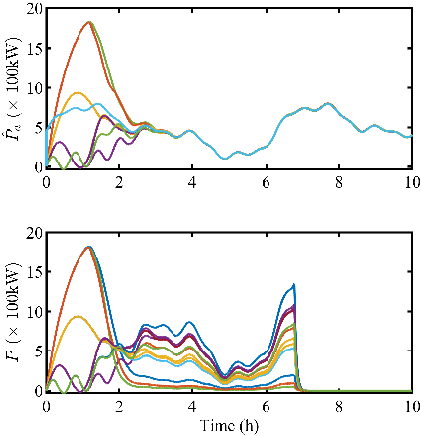}
	\caption{Evolution of power sharing estimation and power of each BESS in Case 4.}
	\label{case4b}
\end{figure}
\begin{figure}
	\centering
	\includegraphics[width=8cm]{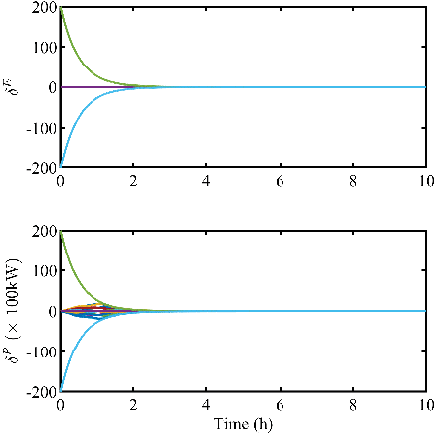}
	\caption{Evolution of consensus error in Case 4.}
	\label{case4d}
\end{figure}
\begin{figure}
	\centering
	\includegraphics[width=8cm]{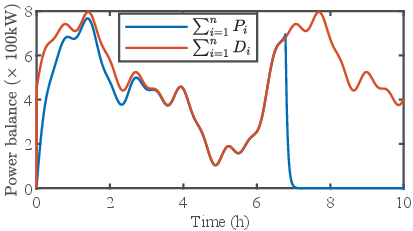}
	\caption{Power demand-supply balance in Case 4.}
	\label{case4c}
\end{figure} 
\subsection{Case 4. Wide Area Microgrid Testing: Scalability}
\par To meet the needs of the wide area microgrid, two sets of BESSs with the same parameters will be added to Case 1, so that the entire resistance network will have 12 BESSs. Correspondingly, the number of agents has also increased to 12. Under the same parameters as Case 1, the simulation results are shown in Figs. \ref{case4a} and \ref{case4b}.
\par From the simulation results, it can be seen that under the driving force of the designed scheme, the estimation of the SoC balance state and average power of this wide area system can be well completed, and the consensus error can be limited within the preset area. SoCs almost decrease to 0 at the same time, after which the output power of BESSs will be 0. The balance of power supply and demand can be ensured with sufficient energy storage. Thus, it can be seen that the designed estimators can meet the needs of the wide area microgrid. This is entirely due to the scalability of distributed algorithms for MASs.
\section{Conclusions}\label{6}
\par In this paper, two distributed estimators with prescribed dynamic and steady-state performance for BESSs with an unified relative discharge rate are designed to ensure that each agent managing BESS can obtain SoC balance and load sharing states. In this way, the consensus and tracking performance of the two estimators are decoupled, thus being constrained by different parameters. However, the continuous communication mechanism used in this article will increase the communication burden. In addition, various forms of cyber attacks that occur from time to time also pose significant risks to communication security. Therefore, how to achieve a solution with preset performance, intermittent communication, and resilience should be a promising research issue. 

\backmatter

\begin{center}
	N\footnotesize{OMENCLATURE}
	\normalsize
	\begin{tabbing}
		\hspace{2cm} \= \kill
		BESS\> battery energy storage system\\
		SoC\>  the State-of-Charge\\
		PPC\> prescribed performance control\\
		MAS\> multi-agent system\\
		power\> proportional output power\\
		$E_i$, $I_i$\> SoC and output current\\
		$\mathrm{Q}_i$, $\mathrm{V}_i$\> the capacity and terminal voltage\\
		$\tilde P_i$, $P_i$\> output power and its proportional value\\
		$\mathrm{K}_i^E$\> a constant defined as $\mathrm{K}_i^E=\frac{1}{\mathrm{Q}_i\mathrm{V}_i}$\\
		$D_i$, $P_a$, $E_a$\> the proportional load, the average power and SoC\\
		$\hat P_{a,i}$, $\hat E_{a,i}$\> the estimated average power and SoC \\
		$\epsilon_E$, $\epsilon_P$\> the pre-specified values\\
		$\cal G$, $\cal V$, $\cal E$\> communication graph, vertex set and edge set\\
		$\cal A$, ${\cal N}_i$\> the adjacency matrix of $\cal G$ and the neighbor set\\
		$\rho_{ij}$, $\rho_l$, $R$\>the performance function and its matrix of power sharing error\\
		$\rho_0$, $\rho_{\infty}$, $\lambda$\> the initial, final values and decay rate of $\rho(t)$\\
		$\xi(t)$, $\xi^P$, $\xi^E$\> a modulated error and its vector forms\\
		$T(\xi)$, $T^P$, $T^E$\> a bijective mapping and the vector forms\\
		$J_T$, $J_T^P$, $J_T^E$\> the Jacobian derivative of $T(\xi)$, $T^P$, $T^E$\\
		$\bar{\epsilon}$, $\mathrm{K}$, $\mathrm{F}$\> some constant defined in Lemma \ref{lem1}\\
		$\epsilon^*$, $\xi^*$ \> some constant defined in the proof\\
		$\mathrm{a}_1$, $\mathrm{a}_2$\> maximum and minimum allowed SoC values\\
		$z_i^P$, $z_i^E$\> the intermediate states of the two estimators\\
		$\delta_l^E$, $\delta_l^P$\> consensus errors for estimation\\
		$\delta^E$, $\delta^P$\> consensus error vectors for estimation\\
		$\Omega_\xi$, $\Omega_\xi^{'}$\> an open and nonempty set and its subset\\
		$V$, $\dot V$\> a Lyapunov function and its derivative\\
		$e^P$\> a tracking error vector\\
		$\mathrm{m}$, $\mathrm{n}$\> number of communication links and of agents\\
		$l$, $i$\> the index of communication links and agents\\
		$\mathrm{k}^P$, $\mathrm{k}^P_\mathrm{r}$\> the gains of the power sharing state estimator\\
		$\mathrm{k}^E$, $\mathrm{k}^E_\mathrm{r}$\> the gains of the SoC balance state estimator\\
	\end{tabbing}
\end{center}
\bmhead{Data availability statement}
 Data sharing is not applicable to this article as no new data were created or analyzed in this study.
\bibliography{re0}

\end{document}